\documentclass[11pt]{article}

\usepackage[margin=0.9in]{geometry}

\usepackage{graphicx}

\usepackage{amssymb}
\usepackage{amsmath}
\usepackage{amsthm}
\usepackage{mathrsfs}
\usepackage[table]{xcolor}
\usepackage{xspace}

\allowdisplaybreaks

\theoremstyle{definition}
\newtheorem{thm}{Theorem}[section]

\newtheorem{proposition}[thm]{Proposition}
\newtheorem{corollary}[thm]{Corollary}

\numberwithin{thm}{section}

\newcommand{\mR}{\mathbb{R}}

\newcommand{\mV}{\mathcal{V}}

\newcommand{\mSS}{\text{SS}}

\newcommand{\mE}{\textbf{E}\xspace}
\newcommand{\mET}{\textbf{ET}\xspace}
\newcommand{\mNP}{\textbf{NP}\xspace}
\newcommand{\mT}{\textbf{T}\xspace}
\newcommand{\mSD}{\textbf{SD}\xspace}

\newcommand{\fulltoday}{\ifcase\month\or
    January\or February\or March\or April\or May\or June\or
    July\or August\or September\or October\or November\or December\fi\space\number\day,\space\number\year}

\begin{document}
\title{Self-Duality and Transfer in Voting Games}
\author{Takaaki Abe\thanks{Department of Economic Engineering, School of Economics, Kyushu University, 744, Motooka, Nishi-ku, Fukuoka, 819-0395, Japan.
Email: takaakiabe@econ.kyushu-u.ac.jp\newline
}}
\date{}
\maketitle

\begin{abstract}
This study examines the role of self-duality in the axiomatization of the Shapley-Shubik power index.
We show that, when self-duality is imposed, the transfer axiom can be weakened to a restricted version requiring transfer only among voting games with no veto player.
This result shows that self-duality can partially substitute for the transfer axiom.
\end{abstract}

\noindent Keywords: axiomatization; duality; power index; transfer \\ 
\noindent JEL Classification: C71

\section{Introduction}\label{SEC_INTRO}
The Shapley-Shubik power index is one of the most prominent measures of voting power in simple voting games. Derived from the Shapley value (Shapley, 1953), the index was proposed by Shapley and Shubik (1954) as an application to voting games. In a voting game, a coalition is either winning or losing, depending on whether it can pass a proposal without the cooperation of the remaining players. The Shapley-Shubik power index evaluates each player's power by considering the probability that the player is \textit{pivotal} in a randomly chosen ordering of players. A player $i$ is pivotal in an ordering if the coalition of the players preceding $i$ is losing, whereas it wins once $i$ joins. Thus, the Shapley-Shubik index measures a player's expected marginal ``impact'' on whether a proposal is approved.

Dubey (1975) provided an axiomatic characterization based on the \textit{transfer} axiom. This axiom is based on the observation that the set of monotonic voting games forms a lattice. The axiom requires that, for any two voting games, the sum of the evaluations of their join and meet coincides with the sum of their original evaluations. Dubey et al. (2005) showed that this axiom can be paraphrased as follows: the change in power caused by adding a collection of winning coalitions is independent from the initial game to which these coalitions are added.
Einy and Haimanko (2011) showed that the \textit{efficiency} axiom can be replaced with the \textit{gain-loss} property if the null player property is strengthened to the dummy player property.\footnote{The gain-loss property requires that whenever one player's power increases, another player's power must decrease. Their axiomatization consists of Gain-Loss, Equal Treatment, Dummy Player, and Transfer. See Einy and Haimanko (2011) for details.} These results illustrate that the transfer axiom plays a central role in characterizing the Shapley-Shubik index, while also raising the question of how much transfer is actually needed.

We address this question by introducing \textit{self-duality} (Sudh\"{o}lter, 1997; Funaki, 1998).\footnote{A related form of self-duality appears in Young (1994) in the context of bankruptcy problems; see Chapter 4, Section 7.} For a game $v$, the dual game $d_v$ is given as follows: for every $S\subseteq N$, $d_v(S)=v(N)-v(N\setminus S)$. In the context of voting games, the dual game has a natural interpretation: a coalition is winning in $d_v$ if and only if it can block its complement $N\setminus S$ from winning in the original game $v$. Thus, $v$ describes the capacity of each coalition to approve a proposal, whereas $d_v$ describes its capacity to block the complementary coalition. A power index satisfies self-duality if it assigns the same power in both representations.

As its main contribution, this study shows that self-duality permits a weakening of the transfer axiom. We require transfer only within the class of voting games with no veto player. We call this restricted requirement \textit{veto-free transfer}. We show that, together with efficiency, the equal treatment property, and the null player property, self-duality and veto-free transfer characterize the Shapley-Shubik power index.

The remainder of this paper is organized as follows.
Section \ref{SEC_PREL} introduces the basic definitions and notation.
Section \ref{SEC_AX} presents the axioms and the characterization result.
Section \ref{SEC_CONC} presents the concluding remarks.
Section \ref{SEC_PROOF} provides the proofs.

\section{Preliminaries}\label{SEC_PREL}
Let $N = \{1,..., n\}$ be the set of players, where $n\geq 3$.\footnote{When $n=2$, the Shapley-Shubik index is uniquely determined by Efficiency, Equal Treatment, and Null Player introduced below.}
Let $S \subseteq N$ denote a coalition.
Let $2^N$ represent the set of all coalitions.
A \textit{game} on $N$ is defined as a function $v : 2^N \rightarrow \mR$ with $v(\emptyset) = 0$.
A game $v$ is called a \textit{voting game} if it satisfies the following properties:
\begin{itemize}
    \item $v(S) \in \{0,1\}$ for every $S \subseteq N$;
    \item $v(N) = 1$;
    \item $v(S) \leq v(T)$ whenever $S \subseteq T \subseteq N$.
\end{itemize}

Let $\mV_N$ denote the set of all voting games on $N$. A coalition $S \subseteq N$ is said to be \textit{winning} in $v \in \mV_N$ if $v(S) = 1$, and \textit{losing} otherwise.
A winning coalition $S\subseteq N$ in $v$ is called a \textit{minimal winning coalition} if no proper subset of $S$ is winning in $v$; that is, $v(S)=1$ and $v(T)=0$ for every $T\subsetneq S$.
Let $W_v$ denote the set of all winning coalitions of $v$.
Let $M_v$ denote the set of all minimal winning coalitions of $v$.
For any finite set $X$, let $|X|$ denote its cardinality.
Players $i$ and $j$ are said to be \textit{symmetric} in $v$ if for every $S\subseteq N\setminus \{i,j\}$, $v(S\cup \{i\})=v(S\cup \{j\})$. Player $i\in N$ is called a \textit{null player} in $v$ if for every $S\subseteq N\setminus \{i\}$, $v(S\cup \{i\})-v(S)=0$.
Player $i$ is called a \textit{veto player} in $v$ if $v(S)=0$ for every $S\subseteq N\setminus \{i\}$.
For every nonempty coalition $T\subseteq N$, the \textit{unanimity game}
$u_T\in \mV_N$ is defined as follows: for every $S\subseteq N$,
\[
u_T(S)=
\begin{cases}
1 & \text{if } T\subseteq S,\\
0 & \text{otherwise}.
\end{cases}
\]
For every $i\in N$, $u_{\{i\}}$ is called a \textit{dictator game}.

A \textit{power index} is a function $\phi: \mV_N \rightarrow \mR^n$. For every $v \in \mV_N$ and $i \in N$, the value $\phi_i(v)$ represents the voting power of player $i$ in the voting game $v$.
The Shapley-Shubik power index $\mSS : \mV_N \rightarrow \mR^n$ is given as follows: for every $v \in \mV_N$ and $i \in N$,
\[
\mSS_i(v) = \sum_{S \subseteq N \setminus \{i\}} \frac{|S|! \, (|N| - |S| - 1)!}{|N|!} \left( v(S \cup \{i\}) - v(S) \right).
\]

\section{Axioms and Results}\label{SEC_AX}
\subsection{Axioms}
We begin by introducing standard axioms for power indices.
\begin{itemize}
\item[] \textbf{Efficiency (\mE).}
For every $v \in \mV_N$, $\sum_{j\in N}\phi_j(v)=1$.

\item[] \textbf{Equal Treatment (\mET).}
For every $v \in \mV_N$ and $i,j \in N$, if $i$ and $j$ are symmetric in $v$, then $\phi_i(v)=\phi_j(v)$.

\item[] \textbf{Null Player (\mNP).}
For every $v \in \mV_N$ and $i \in N$, if $i$ is a null player in $v$, then $\phi_i(v)=0$.
\end{itemize}
The efficiency axiom can be viewed as a normalization requirement: it states that the players' power shares sum to one. The equal treatment property requires that symmetric players receive the same evaluation. The null player property states that a null player receives zero.

We now introduce the axiom studied by Dubey (1975). For every $v,w\in \mV_N$ and $S\subseteq N$, define
\begin{itemize}
\item[] $(v \vee w)(S)= \max\{v(S),w(S)\}$,
\item[] $(v \wedge w)(S)= \min\{v(S),w(S)\}$.
\end{itemize}
In words, a coalition is winning in $v \vee w$ if it is winning in $v$ or $w$, while it is winning in $v \wedge w$ if it is winning in $v$ and $w$.
Note that $\mV_N$ is a lattice: $\mV_N$ is closed under operations $\vee$ and $\wedge$.
Moreover, $\mV_N$ is a distributive lattice: for every $v_1,\ldots,v_m,w\in \mV_N$, $(v_1\vee\cdots\vee v_m)\wedge w = (v_1\wedge w)\vee\cdots\vee(v_m\wedge w)$.\footnote{This is immediately checked as follows.
For every $S\subseteq N$, $((v_1\vee\cdots\vee v_m)\wedge w)(S) = \min\{\max_{\ell=1,\ldots,m}v_\ell(S),w(S)\}$.
If $w(S)=0$, then we have $\min\{\max_{\ell=1,\ldots,m}v_\ell(S),w(S)\}=0=\max_{\ell=1,\ldots,m}\min\{v_\ell(S),w(S)\}$.
If $w(S)=1$, then we have $\min\{\max_{\ell=1,\ldots,m}v_\ell(S),w(S)\}=\max_{\ell=1,\ldots,m}v_\ell(S)=\max_{\ell=1,\ldots,m}\min\{v_\ell(S),w(S)\}$.
Therefore, for every $S\subseteq N$,
$((v_1\vee\cdots\vee v_m)\wedge w)(S)=
\min\{\max_{\ell=1,\ldots,m}v_\ell(S),w(S)\}=
\max_{\ell=1,\ldots,m}\min\{v_\ell(S),w(S)\}=
((v_1\wedge w)\vee\cdots\vee(v_m\wedge w))(S)$.
}
Dubey (1975) introduced the following axiom.
\begin{itemize}
\item[] \textbf{Transfer (\mT).}
For every $v,w \in \mV_N$, $\phi(v \vee w)+\phi(v \wedge w)=\phi(v)+\phi(w)$.
\end{itemize}
Dubey et al. (2005) state that the transfer axiom is equivalent to the following statement:
for every $v,v',w,w'\in \mV_N$, if $W_{v'}\subseteq W_{v}$, $W_{w'}\subseteq W_{w}$, and
$W_v\setminus W_{v'} = W_w\setminus W_{w'}$, then
$\phi(v)-\phi(v') = \phi(w)-\phi(w')$.
In other words, whenever the same collection of winning coalitions is added, the resulting change in power is the same.

Now, we introduce the notion of self-duality.
For every $v \in \mV_N$, the \textit{dual game} $d_v$ of $v$ is defined by
\[
d_v(S) = v(N) - v(N \setminus S)
\]
for every $S \subseteq N$.
For every $A\subseteq \mV_N$, define $d(A)=\{d_v\mid v\in A\}$.
The notion of a dual game was originally defined for general TU games (see Oishi et al. (2016) for a comprehensive review of duality). The worth $d_v(S)$ can be interpreted as the amount that remains for coalition $S$ after assigning to its complement $N \setminus S$ the amount it (namely, $N \setminus S$) claims in the original game $v$.

In the context of voting games, this definition has a particularly appealing interpretation.
For every $v\in \mV_N$, a winning coalition in $d_v$ can be seen as a ``blocking coalition'' in $v$, since $d_v(S)=1$ if and only if $v(N\setminus S)=0$.
More specifically, a coalition $S$ is winning in the dual game $d_v$ if and only if the coalition $S$ can block $N\setminus S$ from being winning in $v$.
Note that $\mV_N$ is closed under the self-dual operation: for every $v\in \mV_N$, $d_v\in  \mV_N$.\footnote{This is immediately checked as follows. We have $d_v(N)=v(N)-v(\emptyset)=1$ and $d_v(\emptyset)=v(N)-v(N)=0$. Now, let $S,T\subseteq N$ with $S\subseteq T$. Then, $N\setminus T \subseteq N\setminus S$. By monotonicity of $v$, $v(N\setminus T)\leq v(N\setminus S)$. Hence, $
d_v(S)=1-v(N\setminus S)\leq 1-v(N\setminus T)=d_v(T)$. Therefore, $d_v\in \mV_N$.}

\begin{itemize}
\item[] \textbf{Self-Duality (\mSD).}
For every $v \in \mV_N$, $\phi(v)=\phi(d_v)$.
\end{itemize}
Self-duality states that a player's power is invariant whether it is measured via winning or blocking coalitions.\footnote{Sudh\"{o}lter (1997) shows that the \textit{modified nucleolus} satisfies \mSD for general TU games; see (iii) of Remark 1.2.}

\subsection{Axiomatization}

We show that self-duality can \textit{partially} substitute for the transfer axiom.
Let $\mV^*_N$ denote the set of voting games with no veto player.
We begin by examining the structure of $\mV^*_N$ and the class of dual games obtained from $\mV^*_N$.

\begin{proposition}\label{PROP_STAR_LATTICE}
$\mV^*_N$ is a lattice.
\end{proposition}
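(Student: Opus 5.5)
The plan is to show that $\mV^*_N$ is closed under the operations $\vee$ and $\wedge$ inherited from $\mV_N$. Since $\mV_N$ is already a (distributive) lattice under these operations, closure makes $\mV^*_N$ a sublattice, and in particular a lattice.

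The key reformulation comes first. For $v\in\mV_N$, player $i$ is not a veto player if and only if $v(N\setminus\{i\})=1$.
\begin{itemize}
\item[] If $i$ is not a veto player, some $S\subseteq N\setminus\{i\}$ has $v(S)=1$, and monotonicity gives $v(N\setminus\{i\})\ge v(S)=1$.
\item[] Conversely, if $v(N\setminus\{i\})=1$, then $S=N\setminus\{i\}$ itself witnesses that $i$ is not a veto player.
\end{itemize}
Hence $v\in\mV^*_N$ if and only if $v(N\setminus\{i\})=1$ for every $i\in N$.

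Now let $v,w\in\mV^*_N$. Both $v\vee w$ and $v\wedge w$ lie in $\mV_N$ by the closure of $\mV_N$ noted earlier. For every $i\in N$ we have $v(N\setminus\{i\})=w(N\setminus\{i\})=1$. Therefore
\[
(v\wedge w)(N\setminus\{i\})=\min\{1,1\}=1
\quad\text{and}\quad
(v\vee w)(N\setminus\{i\})=\max\{1,1\}=1.
\]
By the reformulation, neither $v\wedge w$ nor $v\vee w$ has a veto player, so both belong to $\mV^*_N$.

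The only step needing care is the meet. A naive argument would pick different witnessing coalitions $S_i$ for $v$ and $T_i$ for $w$ and worry that they need not coincide. The reduction to the single coalition $N\setminus\{i\}$ via monotonicity removes this issue, since that one coalition is winning in both games simultaneously. I would also remark that $\mV^*_N$ is nonempty for $n\ge 3$: for example, the majority game in which $S$ wins iff $|S|\ge n-1$ has no veto player. Thus $\mV^*_N$ is a genuine (nonempty) sublattice.
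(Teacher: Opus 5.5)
Your proof is correct and follows the same route as the paper: show that $v\vee w$ and $v\wedge w$ have no veto player whenever $v$ and $w$ have none. The only difference is cosmetic. You use monotonicity up front to reduce ``no veto player'' to $v(N\setminus\{i\})=1$ for all $i$, so that one coalition serves as the witness for both games at once. The paper instead picks separate witnesses $S,T\subseteq N\setminus\{i\}$ and uses monotonicity on their union $S\cup T$ for the meet.
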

Proposition \ref{PROP_STAR_LATTICE} states that if $v$ and $w$ have no veto players, then both $v \vee w$ and $v \wedge w$ also have no veto players.
Moreover, the following proposition ensures that taking dual games carries lattice structure over to the dual class.

\begin{proposition}\label{PROP_DUAL_IS_LATTICE}
Let $A\subseteq \mV_N$ be a nonempty lattice. Then, $d(A)$ is also a lattice.
\end{proposition}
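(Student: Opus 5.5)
The plan is to show that the dual operation interchanges join and meet. Concretely, for every $v,w\in \mV_N$,
\[
d_{v\vee w}=d_v\wedge d_w \quad\text{and}\quad d_{v\wedge w}=d_v\vee d_w .
\]
Once these identities hold, closure of $d(A)$ follows at once. Take any two elements of $d(A)$, written $d_v$ and $d_w$ with $v,w\in A$. Because $A$ is a lattice, $v\vee w\in A$ and $v\wedge w\in A$. Hence $d_v\wedge d_w=d_{v\vee w}\in d(A)$ and $d_v\vee d_w=d_{v\wedge w}\in d(A)$. Nonemptiness of $d(A)$ is inherited from $A$, and $d(A)\subseteq \mV_N$ by the closure of $\mV_N$ under duality established in the footnote above.

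To verify the identities, fix $S\subseteq N$ and use $v(N)=w(N)=(v\vee w)(N)=(v\wedge w)(N)=1$. Then
\[
d_{v\vee w}(S)=1-\max\{v(N\setminus S),w(N\setminus S)\}=\min\{1-v(N\setminus S),1-w(N\setminus S)\}=\min\{d_v(S),d_w(S)\},
\]
and the same computation with $\min$ and $\max$ swapped gives $d_{v\wedge w}(S)=\max\{d_v(S),d_w(S)\}$. This is just the elementary fact that $x\mapsto 1-x$ reverses order, so it turns $\max$ into $\min$ and vice versa.

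There is no real obstacle in this argument. The only point needing care is that the lattice operations on $d(A)$ are the pointwise $\vee$ and $\wedge$ inherited from $\mV_N$, so closure under them is exactly what must be shown. Also, each element of $d(A)$ must be written as $d_v$ for some $v\in A$, which holds by the definition of $d(A)$.
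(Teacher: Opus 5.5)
Your proof is correct and follows the paper's argument: the paper first proves $d_{v\vee w}=d_v\wedge d_w$ and $d_{v\wedge w}=d_v\vee d_w$ (its Lemma A) by the same pointwise computation. It then obtains closure of $d(A)$ under $\vee$ and $\wedge$ from the closure of $A$, exactly as you do.
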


Combining Propositions \ref{PROP_STAR_LATTICE} and \ref{PROP_DUAL_IS_LATTICE}, we immediately obtain the following corollary.

\begin{corollary}\label{COR_PLUS_LATTICE}
Let $\mV^+_N := d(\mV^*_N)$. $\mV^+_N$ is a lattice.
\end{corollary}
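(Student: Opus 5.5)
The corollary follows directly from Propositions \ref{PROP_STAR_LATTICE} and \ref{PROP_DUAL_IS_LATTICE}. I will set $A=\mV^*_N$ in Proposition \ref{PROP_DUAL_IS_LATTICE}. To do so I need two facts: that $\mV^*_N$ is a lattice, which is Proposition \ref{PROP_STAR_LATTICE}, and that $\mV^*_N$ is nonempty. Nonemptiness holds because $n\geq 3$. For example, take the majority game $v(S)=1$ iff $|S|\geq 2$. It lies in $\mV_N$, since it is monotone with $v(N)=1$ and $v(\emptyset)=0$. It has no veto player, since for each $i$ the set $N\setminus\{i\}$ has at least two members and is therefore winning. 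Proposition \ref{PROP_DUAL_IS_LATTICE} then gives that $d(\mV^*_N)=\mV^+_N$ is a lattice.

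As a sanity check on why the dual class should be closed, I would verify the mechanism behind Proposition \ref{PROP_DUAL_IS_LATTICE} in this setting. Duality swaps the lattice operations: for every $S$,
\[
d_{v\vee w}(S)=1-\max\{v(N\setminus S),w(N\setminus S)\}=\min\{d_v(S),d_w(S)\}=(d_v\wedge d_w)(S),
\]
and symmetrically $d_{v\wedge w}=d_v\vee d_w$. So for $d_v,d_w\in\mV^+_N$ with $v,w\in\mV^*_N$, we get $d_v\vee d_w=d_{v\wedge w}$ and $d_v\wedge d_w=d_{v\vee w}$. Both lie in $\mV^+_N$, because $v\wedge w$ and $v\vee w$ lie in $\mV^*_N$ by Proposition \ref{PROP_STAR_LATTICE}.

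There is no real obstacle here. The only point needing care is the nonemptiness hypothesis, which is where the standing assumption $n\geq 3$ is used.
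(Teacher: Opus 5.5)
Your proof is correct and matches the paper, which obtains the corollary by applying Proposition \ref{PROP_DUAL_IS_LATTICE} with $A=\mV^*_N$, using Proposition \ref{PROP_STAR_LATTICE}; your explicit nonemptiness check is a detail the paper leaves implicit. One small correction: nonemptiness does not actually depend on $n\geq 3$, since for any $n\geq 2$ the game in which every nonempty coalition wins has no veto player.
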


Each element of $\mV^+_N$ is a game with no one-person minimal winning coalition.
For example, no dictator game belongs to $\mV^+_N$. Similarly, a game $v$ with $M_v=\{\{1\},\{2\}\}$ also does not belong to $\mV^+_N$.
The following result shows that $\mV^*_N$, $\mV^+_N$, and the dictator games together cover the entire domain $\mV_N$.

\begin{proposition}\label{PROP_DECOMP}
$\mV_N= \mV^*_N \cup \mV^+_N \cup \{u_{\{i\}} | i\in N\}$.
\end{proposition}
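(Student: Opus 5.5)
Only the inclusion $\mV_N \subseteq \mV^*_N \cup \mV^+_N \cup \{u_{\{i\}} \mid i\in N\}$ requires work, since the reverse inclusion is trivial: dual games and unanimity games are voting games. The plan is to take an arbitrary $v\in \mV_N$ with $v\notin \mV^*_N$, so that $v$ has a veto player $i$, and to show that either $v=u_{\{i\}}$ or $v\in \mV^+_N$. The first step is to characterize membership in $\mV^+_N$ directly. The dual operation is an involution on $\mV_N$: $d_{d_v}(S)=1-d_v(N\setminus S)=1-(1-v(S))=v(S)$. Hence $w\in \mV^+_N$ if and only if $d_w\in \mV^*_N$, that is, if and only if $d_w$ has no veto player.

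Next, translate the condition ``$d_w$ has a veto player $j$'' back into a statement about $w$. By definition, $j$ is a veto player in $d_w$ iff $d_w(S)=0$ for every $S\subseteq N\setminus\{j\}$. Since $d_w(S)=1-w(N\setminus S)$, this says $w(T)=1$ for every $T\ni j$. By monotonicity, this is equivalent to $w(\{j\})=1$, i.e.\ $\{j\}$ is a (minimal) winning coalition. This gives the remark preceding the proposition: $w\in\mV^+_N$ iff no singleton is winning in $w$.

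With this characterization, the case analysis is short. Suppose $v$ has a veto player $i$ and $v\notin\mV^+_N$. Then some singleton $\{j\}$ is winning in $v$. Since $i$ is a veto player, every coalition not containing $i$ is losing, so $j=i$ and $v(\{i\})=1$. Monotonicity then gives $v(S)=1$ for all $S\ni i$, while the veto property gives $v(S)=0$ for all $S\not\ni i$. Therefore $v=u_{\{i\}}$.

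There is no real obstacle here. The only step needing care is the equivalence between ``$d_w$ has a veto player'' and ``$w$ has a winning singleton,'' which requires monotonicity and $v(N)=1$ (so that $d_v(S)=1-v(N\setminus S)$). I would state this equivalence as a small preliminary observation and then run the case split above.
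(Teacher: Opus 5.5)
Your proof is correct and follows essentially the paper's argument. Both split on whether some singleton is winning. A winning singleton together with a veto player forces $v=u_{\{i\}}$; otherwise $d_v$ is veto-free, so $v=d_{d_v}\in\mV^+_N$. Your two-sided characterization ($v\in\mV^+_N$ if and only if no singleton wins) also gives the converse direction, which the paper proves separately at the start of Lemma B1.
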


Given the duality relationship between $\mV^*_N$ and $\mV^+_N$, Proposition \ref{PROP_DECOMP} suggests that the full transfer axiom can be weakened in the presence of self-duality. This observation motivates the following weaker axiom, which requires transfer only for veto-free voting games.
\begin{itemize}
\item[] \textbf{Veto-Free Transfer ($\mT^*$).}
For every $v,w \in \mV^*_N$, $\phi(v \vee w)+\phi(v \wedge w)=\phi(v)+\phi(w)$.
\end{itemize}

\begin{proposition}\label{PROP_CHARACTERIZATION}
A function $\phi: \mV_N \rightarrow \mR^n$ satisfies \mE, \mET, \mNP, \mSD, $\mT^*$ if and only if $\phi=\mSS$.
\end{proposition}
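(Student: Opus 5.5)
The proof has two parts: showing that $\mSS$ satisfies the five axioms, and showing that any $\phi$ satisfying them agrees with $\mSS$ on each of the three pieces in Proposition \ref{PROP_DECOMP}. Throughout I will work with $\psi := \phi - \mSS$. Once existence is established, $\psi$ satisfies $\mET$, $\mNP$, $\mSD$ and $\mT^*$, and $\sum_j \psi_j(v) = 0$ for every $v$. The aim is then to show $\psi \equiv 0$.

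\emph{Existence.} $\mE$, $\mET$, $\mNP$ and the full transfer axiom $\mT$ are classical properties of $\mSS$ (Dubey, 1975), and $\mT$ trivially implies $\mT^*$. For $\mSD$, fix $i$ and $S \subseteq N \setminus \{i\}$. Then
\[
d_v(S \cup \{i\}) - d_v(S) = v(N \setminus S) - v(N \setminus (S \cup \{i\})),
\]
which is the marginal contribution of $i$ to $S' := N \setminus (S \cup \{i\})$ in $v$. The map $S \mapsto S'$ is a bijection on subsets of $N \setminus \{i\}$ with $|S'| = n - 1 - |S|$. The weight $|S|!\,(n - |S| - 1)!/n!$ is invariant under $|S| \mapsto n - 1 - |S|$, so $\mSS_i(d_v) = \mSS_i(v)$.

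\emph{Uniqueness on dictator games and on $\mV^+_N$.} In $u_{\{i\}}$ every $j \neq i$ is null, so $\mNP$ and $\mE$ force $\phi(u_{\{i\}}) = \mSS(u_{\{i\}})$. Every game in $\mV^+_N$ is $d_w$ for some $w \in \mV^*_N$, so by $\mSD$ it suffices to prove $\psi = 0$ on $\mV^*_N$. Proposition \ref{PROP_DECOMP} then gives $\psi \equiv 0$ on all of $\mV_N$.

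\emph{Uniqueness on $\mV^*_N$.} The key observation is this: if $v$ has no veto player, then for every $k$ some winning coalition avoids $k$, so by monotonicity $N \setminus \{k\}$ is winning. Hence every $v \in \mV^*_N$ lies above the game $v_0$ whose winning coalitions are exactly $N$ and the sets $N \setminus \{k\}$. For any nonempty $T$, the game $u_T \vee v_0$ has no veto player, so it lies in $\mV^*_N$. Since $v_0 \le v$ and $u_S \le v$ for each $S \in M_v$, we can write
\[
v = \bigvee_{S \in M_v} (u_S \vee v_0).
\]
Using distributivity, $(u_S \vee v_0) \wedge (u_{S'} \vee v_0) = u_{S \cup S'} \vee v_0$, again in $\mV^*_N$ (by Proposition \ref{PROP_STAR_LATTICE}). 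Repeated application of $\mT^*$, by induction on $|M_v|$, gives the inclusion–exclusion formula
\[
\psi(v) = \sum_{\emptyset \neq \mathcal{S} \subseteq M_v} (-1)^{|\mathcal{S}|+1}\, \psi\Bigl(u_{\bigcup \mathcal{S}} \vee v_0\Bigr).
\]
So it suffices to show $\psi(u_T \vee v_0) = 0$ for every nonempty $T$. In $u_T \vee v_0$, players in $T$ are mutually symmetric and players outside $T$ are mutually symmetric. By $\mET$, $\psi$ takes one value $a$ on $T$ and one value $b$ on $N \setminus T$, and $\sum_j \psi_j = 0$ gives $|T|a + (n - |T|)b = 0$. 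When $T = N$ or $|T| \geq n-1$, one of the two classes is empty or the game equals $v_0$, which is symmetric, so $\psi = 0$ there.

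\emph{Main obstacle.} The remaining step is supplying one extra equation for $2 \le |T| \le n-2$ (and $|T| = 1$), since the outside players are not null. My first attempt will use $\mSD$: compute the dual $d_{u_T \vee v_0} = d_{u_T} \wedge d_{v_0}$. Here $d_{u_T}$ is the game in which any member of $T$ wins, and $d_{v_0}$ is the game in which exactly the coalitions of size $\geq 2$ win. If the dual lies in $\mV^*_N$, or can be reached from games already known to have $\psi = 0$ through $\mT^*$-identities inside $\mV^*_N$, this pins down $a$ and $b$. Failing that, I would run a downward induction on $|T|$, starting from $|T| = n-1$. For each $T$ I would seek a $\mT^*$ relation linking $u_T \vee v_0$ to games of the form $u_{T'} \vee v_0$ with $|T'| > |T|$, together with symmetric games in $\mV^*_N$ on which $\psi = 0$ follows from $\mET$ and efficiency alone, such as the games whose winning coalitions are all sets of size at least $q$. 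The relation I am looking for is one that changes $a$ and $b$ differently; combined with the efficiency constraint it would force $a = b = 0$.
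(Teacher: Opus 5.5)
Your existence argument is correct, and so are your reductions: dictator games via \mNP and \mE, $\mV^+_N$ to $\mV^*_N$ via \mSD, and inclusion--exclusion over the join-generators $u_S\vee v_0$ of $\mV^*_N$. But the proof stops where the substance lies. You never show $\psi(u_T\vee v_0)=0$ for $1\le |T|\le n-2$, and neither of your sketched strategies supplies it. For $|T|\ge 2$ the dual $d_{u_T\vee v_0}$ (winning iff $|S|\ge 2$ and $S\cap T\neq\emptyset$) does lie in $\mV^*_N$, but \mSD then only equates two unknown values on $\mV^*_N$. The downward induction lacks the very relation it needs. The root cause is your choice of generators: the games $u_T\vee v_0$ have no null players. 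So \mE, \mET, \mNP leave one free parameter in each of them, and nothing in your argument removes it.

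The missing idea is to decompose into games that do have null players. The paper reduces in the opposite direction. Lemma A ($d_{v\vee w}=d_v\wedge d_w$ and $d_{v\wedge w}=d_v\vee d_w$) lets \mSD and $\mT^*$ give the transfer identity on all of $\mV^+_N$. No singleton wins in a game of $\mV^+_N$, so each $v\in\mV^+_N$ equals $\bigvee_{T\in M_v}u_T$ with $|T|\ge 2$. Each such $u_T$ is the dual of the veto-free game ``$S$ wins iff $S\cap T\neq\emptyset$'' and therefore lies in $\mV^+_N$. All meets of these are unanimity games, which \mE, \mET, \mNP pin down because outside players are null. Inclusion--exclusion then determines $\phi$ on $\mV^+_N$, and \mSD carries it to $\mV^*_N$. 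If you prefer to stay inside $\mV^*_N$, the dual version also works with $\mT^*$ alone. Use the meet form of inclusion--exclusion, which is valid for any valuation on a distributive lattice. Take as meet-generators the games $d_{u_A}$ with $|A|\ge 2$: they lie in $\mV^*_N$, satisfy $d_{u_A}\vee d_{u_B}=d_{u_{A\cup B}}$, and have every player outside $A$ null. For instance, $u_T\vee v_0=\bigwedge\bigl\{d_{u_{\{i,j\}}} : \{i,j\}\cap T\neq\emptyset\bigr\}$.
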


Proposition \ref{PROP_CHARACTERIZATION} shows that self-duality can partially substitute for the role of the transfer axiom. 
Indeed, although transfer is imposed only on the veto-free domain $\mV^*_N$, self-duality extends its role to the dual class $\mV^+_N$. 
Nevertheless, the two axioms are logically independent. The following functions show that the axioms used in Proposition \ref{PROP_CHARACTERIZATION} are independent. Note that $n\geq 3$.

\begin{itemize}
\item Function $f^{\neg\mE}(v):= \frac{1}{2}\mSS(v)$ violates \mE but satisfies the other four axioms.
\item Let $\lambda(k):=\{1,...,k-1\}$ for $k=2,...,n$, where $\lambda(1)=\emptyset$.
Let $\rho(k):=\{k+1,...,n\}$ for $k=1,...,n-1$, where $\rho(n)=\emptyset$.
For every $i\in N$, define $f^{\neg\mET}_i(v)=\frac{1}{2}\left( v(\lambda(i)\cup\{i\})-v(\lambda(i)) \right)+\frac{1}{2}\left( v(\rho(i)\cup\{i\})-v(\rho(i)) \right)$. This function violates \mET but satisfies the other four axioms.

\item Function $f^{\neg\mNP}_i(v):= \frac{1}{n}$ violates \mNP but satisfies the other four axioms.

\item Let $\tilde{v}$ be the game with $M_{\tilde{v}}=\{\{1,k\} \mid k\in\{2,...,n\}\}$.
Define $f^{\neg\mSD}_i(\tilde{v}):= \frac{1}{n}$ and $f^{\neg\mSD}_i(v):= \mSS_i(v)$ for every $v\in\mV_N\setminus \{\tilde{v}\}$. Note that there is no null player in $\tilde{v}$ and that $\tilde{v}\notin \mV^*_N$ as $1$ is a veto player. Moreover, $d_{\tilde{v}}\in \mV^*_N$ as $M_{d_{\tilde{v}}}=\{\{1\},\{2,...,n\}\}$. This function violates \mSD but satisfies the other four axioms.

\item Define $f^{\neg\mT^*}_i(\tilde{v}):= \frac{1}{n}$, $f^{\neg\mT^*}_i(d_{\tilde{v}}):= \frac{1}{n}$, and $f^{\neg\mT^*}_i(v):= \mSS_i(v)$ for every $v\in\mV_N\setminus \{\tilde{v}, d_{\tilde{v}}\}$. Note that there is no null player in $d_{\tilde{v}}$. This function violates $\mT^*$ but satisfies the other four axioms.

\end{itemize}

\section{Conclusion}\label{SEC_CONC}
This study provided a characterization of the Shapley-Shubik power index by combining self-duality with a restricted form of the transfer axiom. 
While the transfer axiom is usually imposed on the entire domain of voting games, we showed that, in the presence of self-duality, it is sufficient to require transfer only on the class of voting games without veto players.
This result shows that self-duality can partially substitute for transfer, although self-duality and veto-free transfer are logically independent.

\section{Proofs}\label{SEC_PROOF}

\noindent\textbf{Proposition \ref{PROP_STAR_LATTICE}.}
$\mV^*_N$ is a lattice.
\begin{proof}
We show that $\mV^*_N$ is closed under the operations $\vee$ and $\wedge$.
Let $v,w\in \mV^*_N$. 

First, we show that $v\vee w$ has no veto player.
Let $i\in N$. Since $v\in \mV_N^*$, player $i$ is not a veto player in $v$. Hence, there is a coalition $S\subseteq N\setminus\{i\}$ such that $v(S)=1$.
Then, since $w(S)\in \{0,1\}$,
\[
(v\vee w)(S)=\max\{v(S),w(S)\}=1.
\]
Since $S\subseteq N\setminus\{i\}$, $i$ is not a veto player in $v\vee w$.
This holds for every $i\in N$. Hence, $v\vee w$ has no veto player: formally, $v\vee w\in \mV_N^*$.

Next, we show that $v\wedge w$ has no veto player.
Let $i\in N$. Since $v\in \mV_N^*$, player $i$ is not a veto player in $v$. Hence, there is a coalition $S\subseteq N\setminus\{i\}$ such that $v(S)=1$.
Similarly, since $w\in \mV_N^*$, player $i$ is not a veto player in $w$. Hence, there is a coalition $T\subseteq N\setminus\{i\}$ such that $w(T)=1$.
Since $S,T \subseteq N\setminus\{i\}$, we have $S\cup T\subseteq N\setminus\{i\}$.
By monotonicity of $v$ and $w$, we obtain $v(S\cup T)\geq v(S)=1$ and $w(S\cup T)\geq w(T)=1$. Hence, $v(S\cup T)=1$ and $w(S\cup T)=1$.
It follows that
\[
(v\wedge w)(S\cup T)=\min\{v(S\cup T),w(S\cup T)\}=1.
\]
Since $S\cup T \subseteq N\setminus\{i\}$, $i$ is not a veto player in $v\wedge w$.
This holds for every $i\in N$. Hence, $v\wedge w$ has no veto player: formally, $v\wedge w\in \mV_N^*$.
\end{proof}

\noindent\textbf{Lemma A.}
For every $v,w\in \mV_N$, $d_{v\vee w}=d_v\wedge d_w$ and $d_{v\wedge w}=d_v\vee d_w$.
\begin{proof}
For every $S\subseteq N$, we have
$d_{v\vee w}(S)
= 1-(v\vee w)(N\setminus S)
= 1-\max\{v(N\setminus S),w(N\setminus S)\}
= \min\{1-v(N\setminus S),1-w(N\setminus S)\}
= \min\{d_v(S),d_w(S)\}
= (d_v\wedge d_w)(S)$.
Similarly, for every $S\subseteq N$, we have
$d_{v\wedge w}(S)
= 1-(v\wedge w)(N\setminus S)
= 1-\min\{v(N\setminus S),w(N\setminus S)\}
= \max\{1-v(N\setminus S),1-w(N\setminus S)\}
= \max\{d_v(S),d_w(S)\}
= (d_v\vee d_w)(S)$.
\end{proof}

\noindent\textbf{Proposition \ref{PROP_DUAL_IS_LATTICE}.}
Let $A\subseteq \mV_N$ be a nonempty lattice. Then, $d(A)$ is also a lattice.
\begin{proof}
Let $A\subseteq \mV_N$ be a nonempty lattice. Let $v',w'\in d(A)$.
We show that $v'\vee w' \in d(A)$ and $v'\wedge w'\in d(A)$.
There are $v,w\in A$ such that $v'=d_v$ and $w'=d_w$.
Since $A$ is a lattice, we have $v\wedge w\in A$.
Hence, $d_{v\wedge w} \in d(A)$.
We have $v'\vee w' = d_v\vee d_w \overset{\text{Lma.A}}{=} d_{v\wedge w} \in d(A)$.
Similarly, $v\vee w\in A$. Hence, $d_{v\vee w} \in d(A)$.
We have $v'\wedge w'=d_v\wedge d_w \overset{\text{Lma.A}}{=}d_{v\vee w} \in d(A)$.
This completes the proof.
\end{proof}

\noindent\textbf{Proposition \ref{PROP_DECOMP}.}
$\mV_N= \mV^*_N \cup \mV^+_N \cup \{u_{\{i\}} | i\in N\}$.
\begin{proof}
It is straightforward that $\mV^*_N \cup \mV^+_N \cup \{u_{\{i\}} | i\in N\} \subseteq \mV_N$.
We show the reverse inclusion.
Let $v\in \mV_N$. If $v$ has no veto player, then $v\in \mV_N^*$.
Hence, suppose that $v$ has at least one veto player.
We consider two cases: (i) there is $i\in N$ such that $v(\{i\})=1$; (ii) for every $i\in N$, $v(\{i\})=0$.

Case (i). Since $v(\{i\})=1$, for every $S\subseteq N$ with $i\in S$, $v(S)=1$. Now assume that there is $S\subseteq N\setminus \{i\}$ such that $v(S)=1$. Then, $\cap_{S\in W_v}S$ is empty, which contradicts the fact that $v$ has a veto player.
Hence, for every $S\subseteq N\setminus \{i\}$, $v(S)=0$.
Therefore, $v=u_{\{i\}}$.

Case (ii). For this $v$, consider $d_v$. Assume that $d_v$ has a
veto player $i\in N$. Then, for every $S\subseteq N\setminus\{i\}$,
we have $d_v(S)=0$.
Since $d_v(S)=1-v(N\setminus S)$ for every $S\subseteq N$, it holds that for every $S\subseteq N\setminus\{i\}$, $v(N\setminus S)=1$.
Taking $S=N\setminus\{i\}$, we obtain
$v(\{i\})= v(N\setminus (N\setminus \{i\})) = 1$,
which contradicts the assumption of Case (ii).
Therefore, $d_v$ has no veto player: equivalently, $d_v\in \mV_N^*$.
It follows that $v=d_{(d_v)}\in \mV_N^+$.
\end{proof}

\noindent\textbf{Lemma B1.}
For every $v\in \mV_N^+$ and $T\in M_{v}$, $u_T \in \mV_N^+$.
\begin{proof}
We first show that for every $v \in \mV_N^+$ and $i\in N$, $v(\{i\})=0$.
Let $v\in \mV_N^+$. By definition of $\mV_N^+$, there is $v'\in \mV_N^*$ such that $v=d_{v'}$.
Let $i\in N$. Since $v'\in \mV_N^*$, the player $i$ is not a veto player in $v'$.
Hence, there is a coalition $S\subseteq N\setminus\{i\}$ such that
$v'(S)=1$. Since $v'$ is monotonic, we have $v'(N\setminus\{i\})=1$, which implies that $v(\{i\}) = d_{v'}(\{i\}) = 1-v'(N\setminus\{i\}) = 0$. This holds for every $i\in N$.
Hence, for every $v \in \mV_N^+$ and $i\in N$, $v(\{i\})=0$.
This implies that every minimal winning coalition of $v$ contains at least two players: formally, for every $T\in M_v$, $|T|\geq 2$.

Now, we show that for every $T\subseteq N$ with $|T|\geq 2$, $u_T \in \mV_N^+$.
Let $T\subseteq N$ with $|T|\geq 2$.
For the given $T$, let $w$ be the voting game satisfying $M_w=\{\{i\}\mid i\in T\}$.
Then, $w(S)=1$ if and only if $S\cap T\neq\emptyset$.
Since $|T|\geq 2$, no player is a veto player in $w$; hence $w\in \mV_N^*$.
For every $S\subseteq N$, we have the following:
\[
\begin{aligned}
d_w(S)=1
&\iff w(N\setminus S)=0\\
&\iff (N\setminus S)\cap T=\emptyset\\
&\iff T\subseteq S\\
&\iff u_T(S)=1.
\end{aligned}
\]
Hence, $d_w=u_T$. Since $w\in \mV_N^*$, we have $d_w\in \mV_N^+$. Therefore, $u_T\in \mV_N^+$.
\end{proof}

\noindent\textbf{Lemma B2.}
If $\phi$ satisfies \mSD and $\mT^*$, then for every $v,w\in \mV_N^+$,
$\phi(v\vee w)+\phi(v\wedge w)=\phi(v)+\phi(w)$.
\begin{proof}
Let $v,w\in \mV_N^+$. 
Note that $v\vee w \in \mV_N^+$ and $v\wedge w \in \mV_N^+$ by Corollary \ref{COR_PLUS_LATTICE}.
By definition of $\mV_N^+$, there are $v',w'\in \mV_N^*$ such that $v=d_{v'} \text{ and } w=d_{w'}$.
Therefore, we have
\[
\begin{alignedat}{2}
\phi(v\vee w)+\phi(v\wedge w)
&= \phi(d_{v'}\vee d_{w'})+\phi(d_{v'}\wedge d_{w'})\\
&= \phi(d_{v'\wedge w'})+\phi(d_{v'\vee w'})
&\quad& \text{by Lemma A}\\
&= \phi(v'\wedge w')+\phi(v'\vee w')
&\quad& \text{by \mSD}\\
&= \phi(v'\vee w')+\phi(v'\wedge w')\\
&= \phi(v')+\phi(w')
&\quad& \text{by $\mT^*$ and $v',w'\in \mV_N^*$}\\
&= \phi(d_{v'})+\phi(d_{w'})
&\quad& \text{by \mSD}\\
&= \phi(v)+\phi(w).
\end{alignedat}
\]
\end{proof}

\noindent\textbf{Lemma B3.}
Let $A\subseteq \mV_N$ be a nonempty lattice. Suppose that $\phi:\mV_N\rightarrow \mR^n$ satisfies the following: for every $v,w\in A$, 
$\phi(v\vee w)+\phi(v\wedge w)=\phi(v)+\phi(w)$.
Then, for every $k\geq 1$ and every $v_1, ...,v_k\in A$,
\[
\phi(v_1\vee ...\vee v_k)
=
\sum_{\emptyset\neq I\subseteq \{1,...,k\}}
(-1)^{|I|+1}
\phi\left(\bigwedge_{\ell\in I}v_\ell\right).
\]
\begin{proof}
Note that this result can be viewed as a slight generalization of Lemma 2.3 of Einy (1987) to an arbitrary sublattice $A$. Since the proof of Lemma 2.3 in Einy (1987) is omitted, we provide a proof below for the sake of completeness.

We prove the equation by induction on $k$.
For $k=1$, the equation immediately holds.
Let $k\geq 2$. Suppose that the equation holds for $k-1$.
Let $v_1, ...,v_k\in A$ and $w=v_1\vee ...\vee v_{k-1}$.
Since $A$ is a lattice, we have $w\in A$.
Moreover, since $\phi$ satisfies $\phi(v'\vee w')+\phi(v'\wedge w')=\phi(v')+\phi(w')$ for every $v',w'\in A$, we obtain
\begin{equation}
\phi(w\vee v_k)=\phi(w)+\phi(v_k)-\phi(w\wedge v_k). \label{eq_0524_1252}
\end{equation}
We focus on the right-hand side of (\ref{eq_0524_1252}).
For the term $\phi(w)$, by the induction hypothesis,
\begin{equation}
\phi(w) = \sum_{\emptyset\neq I\subseteq \{1,...,k-1\}}
(-1)^{|I|+1} \phi\left(\bigwedge_{\ell\in I}v_\ell\right). \label{eq_0524_1253}
\end{equation}
We next consider the term $\phi(w\wedge v_k)$ of (\ref{eq_0524_1252}). By the distributive rule of $\vee$ and $\wedge$ on $\mV_N$, we have
$w\wedge v_k = (v_1\vee...\vee v_{k-1})\wedge v_k = (v_1\wedge v_k)\vee...\vee(v_{k-1}\wedge v_k)$.
Since $A$ is a lattice, each $v_\ell\wedge v_k$ belongs to $A$.
By the induction hypothesis, we obtain
\begin{align}
\phi(w\wedge v_k)
&=\phi((v_1\wedge v_k)\vee...\vee(v_{k-1}\wedge v_k)) \nonumber\\
&=\sum_{\emptyset\neq I\subseteq \{1,...,k-1\}}
(-1)^{|I|+1} \phi\left(\bigwedge_{\ell\in I}(v_\ell\wedge v_k)\right) \nonumber\\
&=\sum_{\emptyset\neq I\subseteq \{1,...,k-1\}}
(-1)^{|I|+1} \phi\left(\bigwedge_{\ell\in I\cup\{k\}}v_\ell\right).\label{eq_0524_1305}
\end{align} 
Since $w\vee v_k=v_1\vee...\vee v_k$, we have the following:
\begin{align}
\phi(v_1\vee...\vee v_k)
&=
\phi(w\vee v_k)\nonumber\\
&\overset{(\ref{eq_0524_1252})}{=} \phi(w)+\phi(v_k)-\phi(w\wedge v_k) \nonumber\\
&\overset{(\ref{eq_0524_1253})(\ref{eq_0524_1305})}{=}
\sum_{\emptyset\neq I\subseteq \{1,...,k-1\}}
(-1)^{|I|+1}
\phi\left(\bigwedge_{\ell\in I}v_\ell\right)
+\phi(v_k) \nonumber\\
&\quad
-\sum_{\emptyset\neq I\subseteq \{1,...,k-1\}}
(-1)^{|I|+1}\phi\left(\bigwedge_{\ell\in I\cup\{k\}}v_\ell \right) \nonumber\\
&=
\sum_{\emptyset\neq I\subseteq \{1,...,k-1\}}
(-1)^{|I|+1}\phi\left(\bigwedge_{\ell\in I}v_\ell\right)
+\phi(v_k) \nonumber\\
&\quad
+\sum_{\emptyset\neq I\subseteq \{1,...,k-1\}}
(-1)^{|I|+2}
\phi\left(\bigwedge_{\ell\in I\cup\{k\}}v_\ell\right)  \nonumber\\
&=
\sum_{\substack{\emptyset\neq I\subseteq \{1,...,k\} \\ k\notin I}}
(-1)^{|I|+1}\phi\left(\bigwedge_{\ell\in I}v_\ell\right)
+\phi(v_k) \nonumber\\
&\quad
+\sum_{\substack{ I\subseteq \{1,...,k\} \\ k\in I \\ I\setminus\{k\}\neq \emptyset}}
(-1)^{|I|+1}
\phi\left(\bigwedge_{\ell\in I}v_\ell\right) \nonumber\\
&=\sum_{\emptyset\neq I\subseteq \{1,...,k\}}
(-1)^{|I|+1}
\phi\left(\bigwedge_{\ell\in I}v_\ell\right). \nonumber
\end{align}
This completes the proof.
\end{proof}

\noindent\textbf{Proposition \ref{PROP_CHARACTERIZATION}.}
A function $\phi: \mV_N \rightarrow \mR^n$ satisfies \mE, \mET, \mNP, \mSD, $\mT^*$ if and only if $\phi=\mSS$.
\begin{proof}
It is straightforward that $\mSS$ satisfies \mE, \mET, \mNP, \mSD, and $\mT^*$. We prove the uniqueness part.
Let $v\in \mV_N$. By Proposition \ref{PROP_DECOMP}, we have (i) $v\in \{u_{\{i\}} | i\in N\}$,  (ii) $v\in \mV^+_N$, or (iii) $v\in \mV^*_N$.

Case (i). By \mE and \mNP, $\phi(v)$ is uniquely determined.

Case (ii). Since every voting game can be written as the join of the unanimity games associated with its minimal winning coalitions, we have $v=u_{T_1}\vee ... \vee u_{T_k}$, where $T_1, ..., T_k\in M_v$.
Moreover, for every $\ell\in \{1,...,k\}$, Lemma B1 implies that $u_{T_\ell}\in \mV^+_N$.
In addition, since $\phi$ satisfies \mSD and $\mT^*$, Lemma B2 ensures that $\phi(v\vee w)+\phi(v\wedge w)=\phi(v)+\phi(w)$ for every $v,w\in \mV_N^+$.
Since $\bigwedge_{\ell\in I}u_{T_\ell} = u_{\cup_{\ell\in I}T_\ell}$ for every nonempty $I\subseteq \{1,...,k\}$, Lemma B3 implies that
\[
\phi(v)= \sum_{\emptyset\neq I\subseteq \{1,...,k\}} (-1)^{|I|+1} \phi(u_{\cup_{\ell\in I}T_\ell}).
\]
For every $T\subseteq N$, $\phi(u_{T})$ is uniquely determined by \mE, \mET, and \mNP.
Hence, $\phi(v)$ is uniquely determined.

Case (iii). It holds that $d_v\in \mV^+_N$. By \mSD, we have $\phi(v)=\phi(d_v)$, where, by Case (ii), $\phi(d_v)$ is uniquely determined.

Since $\mSS$ satisfies the axioms, the uniquely determined function $\phi$ must coincide with $\mSS$.
\end{proof}

\section*{Declaration}

\noindent\textbf{Funding.} The author gratefully acknowledges the financial support from JSPS: No.22H00829.

\noindent\textbf{Conflict of interest.} The author declares that there are no conflicts of interest.

\noindent\textbf{Data availability.} This work is not based on any empirical or simulated data.

\end{document}